\numberwithin{equation}{section}
\newtheorem{theorem}{Theorem}[section]
\newtheorem{coro}[theorem]{Corollary}
\newtheorem{proposition}[theorem]{Proposition}
\numberwithin{equation}{section}
\begin{document}

\title[A direct method in noncommutative integrable systems]{A direct method in noncommutative integrable systems}

\subjclass[2020]{37K10, 15A15}
\date{}

\dedicatory{}

\keywords{quasi-determinant; noncommutative integrable system; direct method}

\author{Shi-Hao Li}
\address{Department of Mathematics, Sichuan University, Chengdu, 610064, PR China}
\email{shihao.li@scu.edu.cn}

\author{Shou-Feng Shen}
\address{School of Mathematical Sciences, Zhejiang University of Technology, Hangzhou, 310023, Zhejiang, China}
\email{mathssf@zjut.edu.cn}

\author{Guo-Fu Yu}
\address{School of Mathematical Sciences, Shanghai Jiaotong University, People's Republic of China; School of Mathematical Sciences, CMA-Shanghai, Shanghai Jiao Tong University, Shanghai 200240}
\email{gfyu@sjtu.edu.cn}

\author{Jun-Yang Zhang}
\address{School of Mathematical Sciences, Zhejiang University of Technology, Hangzhou, 310023, Zhejiang, China}
\email{zhangjunyang142@gmail.com}

\begin{abstract}
We present a constructive framework for deriving noncommutative (NC) integrable equations directly from quasi-determinant solutions. Building upon the quasi-Wronskian structure, we extend the classical direct method to the NC setting, where standard determinant identities are replaced by algebraic relations intrinsic to quasi-determinants. By analyzing derivative identities satisfied by quasi-determinants, we recover the NC Kadomtsev-Petviashvili (ncKP) and Date-Jimbo-Kashiwara-Miwa (ncDJKM) equations by cancellations of nonlinear terms. Furthermore, by imposing flow constraints on the seed functions, we derive NC reductions such as the ncKdV and ncBoussinesq equations and obtain explicit matrix-valued soliton solutions. Our results highlight the quasi-determinant as a fundamental algebraic structure underpinning NC $\tau$-function structure.

\end{abstract}

\maketitle

\section{Introduction}

Direct method, or the so-called Hirota's bilinear method is one of the most effective way in solving nonlinear integrable systems. It has exclusively exhibited by Hirota that the bilinear equations, which is obtained by taking the dependent variables into the nonlinear equation, is equivalent to algebraic 
identities of determinants or Pfaffians \cite{hirota04}. This genius idea is then generalized by the Kyoto school, showing that the nonlinear integrable systems could be realized as some orbits in Grassmannian, and the determinantal and Pfaffian identities are satisfied by Pl\"ucker coordiates on the Grassmannian \cite{sato83}. The solution of a bilinear equation, often referred to as a $\tau$-function, encapsulates the essential algebraic structure of the system, and the integrable equation arises as an identity among its derivatives  \cite{hirota71,hirota04,sato83,date82}.
Interestingly, the above facts could be understood backwards. \emph{If one starts with a determinant or Pfaffian structure, then certain integrable nonlinear equations could be obtained with proper dispersion relations.} This idea has been widely applied in the theory of integrable system in constructing bilinear B\"acklund transformation \cite{nimmo84,nimmo83}, reduction theory in rogue waves \cite{ohta14,feng22}, among many other fields.

The purpose of this work is to extend this philosophy to the noncommutative (NC) setting, where  determinants are no longer applicable. The quasi-determinants, introduced by Gelfand and Retakh \cite{gelfand91,gelfand05}, serve as a suitable generalization for matrices over noncommutative division rings. We formulate a framework in which NC integrable equations, such as the noncommutative KP (ncKP) hierarchy and its reductions, could be directly derived from quasi-determinant solutions, notably the quasi-Wronskian or quasi-Grammian types. This can be viewed as a noncommutative generalization of Hirota's bilinear method. This approach builds on the work of Gilson and Nimmo \cite{gilson07}, who constructed explicit quasi-determinant solutions of the ncKP equation by using Darboux and binary Darboux transformations. In contrast to verifying quasi-determinants satisfy some known noncommutative equations, our focus is on deriving these equations from the algebraic structure of the solution. In particular, we show that the reduction theory in bilinear method can also be applied to the noncommutative case. We show the validity of our methods by getting the noncommutative KdV (ncKdV) and noncommutative Boussinesq (ncBoussinesq) equations, as well as their solutions. 

In physical literatures, NC integrable systems have received considerable attention due to their links with noncommutative geometry, deformation quantization, and gauge theory \cite{szabo03,castellani00,asakawa00,landi03}. Notable examples include noncommutative generalizations of the KP and KdV equations, where the underlying function algebra is deformed into a noncommutative associative algebra, often via the Moyal product \cite{hamanaka10,paniak01}. Geometrically, these systems arise from reductions of the noncommutative (anti-)self-dual Yang-Mills equations, supporting the noncommutative Ward conjecture \cite{hamanaka03}. Their hierarchies retain rich structure, including commuting flows and conserved quantities \cite{hamanaka05}, and their solution spaces share structural similarities with the commutative case via Birkhoff factorization \cite{sakakibara04}. Explicit multi-soliton solutions have been constructed using quasi-determinants and deformation techniques \cite{hamanaka07}. It should be mentioned that the quasi-determinants technique is not only useful in verification of solutions for NC equations, but applicable to construct NC B\"acklund transformation \cite{gilson08}.  The technique of quasi-determinants also makes tight connections with Somos sequence in combinatorics \cite{bobrova24}, noncommutative leapfrog map in projective geometry and networks \cite{wang25}, 
Wynn recurrence in numerical algorithms \cite{doliwa23}, and matrix-valued orthogonal polynomials \cite{li24,gilson25}. 

The structure and main results of this paper are organized as follows.
\begin{itemize}
    \item In Section 2, we review the ncKP hierarchy from the pseudo-differential Lax formalism. Reductions to the ncKdV and ncBoussinesq equations are obtained by constraining the operator structure.

    \item In Section 3, we begin with a quasi-determinant solution, specifically the quasi-Wronskian, and reconstruct the ncKP and noncommutative Date-Jimbo-Kashiwara-Miwa (ncDJKM) equations by analyzing the cancellation of nonlinear terms in higher-order derivatives. We also present matrix-valued soliton solutions of one- and two-soliton type.

    \item In Sections 4, we revisit the ncKdV and ncBoussinesq equations by imposing variable-independence constraints on the quasi-determinant. We show that the quasi-determinant remains a valid solution under such reductions.
\end{itemize}

This work establishes a framework for deriving NC integrable equations directly from quasi-determinant solutions. It generalizes the classical direct method to the noncommutative setting and reveals the quasi-determinant as a fundamental object from which integrable structures can be systematically generated.

\section{NC Integrable Equations}
\noindent In this section, we derive several NC integrable equations based on the ncKP hierarchy. First, let us introduce a pseudo-differential operator $L$ which include negative powers of differential operators
\begin{equation}
    L = \partial_{t_1} + u_2 \partial_{t_1}^{-1} + u_3 \partial_{t_1}^{-2} + u_4 \partial_{t_1}^{-3} + \cdots,
\end{equation}
where $u_k$ ($k = 2,3, \ldots$) are some associative variables but not necessarily commutative, depending on parameters $t_q$ ($q = 1, 2, \ldots$). The action of a differential operator $\partial_{t_1}^n$ on a multiplicity operator $f$ is formally defined as the following generalized Leibniz rule
\begin{equation}
    \partial_{t_1}^n \cdot f := \sum_{i \geq 0} \binom{n}{i} (\partial_{t_1}^i f) \partial_{t_1}^{n-i},
\end{equation}
where the binomial coefficient is given by
\begin{equation}
    \binom{n}{i} := \frac{n (n-1) \cdots (n-i+1)}{i (i-1) \cdots 1}.
\end{equation}

\subsection{ncKP Hierarchy}
Similar to commutative integrable equations, 
the ncKP hierarchy could be defined by Lax representations (c.f. \cite{hamanaka10})
\begin{equation} \label{lax_equation}
    \partial_{t_m} L = \left[B_m, L\right], \quad \mathrm{with} \quad B_m = (L^m)_{\geq 0}, \quad m = 1, 2, \ldots,
\end{equation}
where $(L^m)_{\geq 0}$ represents the polynomial part of $\partial_{t_1}$ in $L^m$, i.e. $B_m$ is a differential operator of order $m$, and $\left[B_m, L\right] := B_m L - L B_m$ is the usual commutator of operators. The first four such
differential operators are
\begin{equation}
    \begin{split}
        &B_1 = \partial_{t_1}, \\
        &B_2 = \partial_{t_1}^2 + 2 u_2, \\
        &B_3 = \partial_{t_1}^3 + 3 u_2 \partial_{t_1} + 3 (\partial_{t_1} u_2 + u_3), \\
        &B_4 = \partial_{t_1}^4 + 4 u_2 \partial_{t_1}^2 + 2 (3\partial_{t_1} u_2 + 2 u_3) \partial_{t_1} + 2 (2 \partial_{t_1}^2 u_2 + 3\partial_{t_1} u_3 + 2 u_4 + 3 u_2^2). 
    \end{split}
\end{equation}
Thus, via the Lax equation (\ref{lax_equation}), we obtain the following ncKP hierarchy
\begin{subequations}
\begin{align}  \label{m=2}
  &  \partial_{t_2} L = \left[B_2, L\right] \Leftrightarrow
    \left\{
    \begin{array}{l}
        \partial_{t_2} u_2 = \partial_{t_1}^2 u_2 + 2\partial_{t_1} u_3, \\
        \partial_{t_2} u_3 = \partial_{t_1}^2 u_3 + 2\partial_{t_1} u_4 + 2 u_2 \partial_{t_1} u_2 + 2 \left[u_2, u_3\right], \\
        \partial_{t_2} u_4 = \partial_{t_1}^2 u_4 + 2\partial_{t_1} u_5 + 4 u_3 \partial_{t_1} u_2 - 2 u_2 \partial_{t_1}^2 u_2 + 2 \left[u_2, u_4\right], \\
        \quad \cdots,
    \end{array}
    \right.\\
  \label{m=3}
  &  \partial_{t_3} L = \left[B_3, L\right] \Leftrightarrow
    \left\{
    \begin{array}{l}
        \partial_{t_3} u_2 = \partial_{t_1}^3 u_2 + 3\partial_{t_1}^2 u_3 + 3\partial_{t_1} u_4 + 3\{u_2, \partial_{t_1} u_2\}, \\
        \partial_{t_3} u_3 = \partial_{t_1}^3 u_3 + 3\partial_{t_1}^2 u_4 + 3\partial_{t_1} u_5 + 3\partial_{t_1} u_2 u_3 + 6 u_2 \partial_{t_1} u_3 \\
        \quad\quad + 3 u_3 \partial_{t_1} u_2 + 3 [u_2, u_4], \\
        \quad \cdots,
    \end{array}
    \right.\\
      \label{m=4}
    &\partial_{t_4} L = \left[B_4, L\right] \Leftrightarrow
    \left\{
    \begin{array}{l}
        \partial_{t_4} u_2 = \partial_{t_1}^4 u_2 + 4\partial_{t_1}^3 u_3 + 6\partial_{t_1}^2 u_4 + 4\partial_{t_1} u_5 + 2 u_2 \partial_{t_1}^2 u_2 + 6\partial_{t_1} u_2 \partial_{t_1} u_2 \\
        \quad\quad + 4\partial_{t_1}^2 u_2 u_2 + 6\partial_{t_1} \{u_2, u_3\}, \\
        \quad \cdots.
    \end{array}
    \right.
\end{align}
\end{subequations}
In the above equations, we use notations $[A, B] = AB - BA$ and $\{A, B\} = AB + BA$ to denote commutator and anti-commutator, respectively.

In the subsequent, we show how to obtain some simple NC equations from the hierarchy. 
For example, we can get the ncKP equation \cite{gilson07,hamanaka03,hamanaka10,paniak01,sakakibara04}
\begin{equation} \label{ncKP}
    u_{t_3} = \frac{1}{4} u_{t_1t_1t_1} + \frac{3}{4} \partial_{t_1}^{-1} u_{t_2t_2} - \frac{3}{4} \left[u, \partial_{t_1}^{-1} u_{t_2}\right] + \frac{3}{4} \{u, u_{t_1}\},\quad \partial_{t_1}^{-1} f(t_1) = \int^{t_1} f(t_1^\prime) dt_1^\prime
\end{equation}
if we get the explicit expressions for  $u_4$ and $u_3$ from (\ref{m=2})
\begin{equation} \label{u_3}
    \partial_{t_1} u_4 = \frac{1}{2} \partial_{t_2} u_3 - \frac{1}{2} \partial_{t_1}^2 u_3 - u_2 \partial_{t_1} u_2 - \left[u_2, u_3\right], \quad \partial_{t_1} u_3 = \frac{1}{2} \partial_{t_2} u_2 - \frac{1}{2} \partial_{t_1}^2 u_2,
\end{equation}
and substitute them into the first equation in (\ref{m=3}) with variable transformations $u = 2 u_2$.


Besides, from the first two equations in (\ref{m=3}), we can obtain the expressions for $u_5$ and $u_4$
\begin{equation}
    \begin{split}
        &\partial_{t_1} u_5 = \frac{1}{3} \partial_{t_3} u_3 - \frac{1}{3} \partial_{t_1}^3 u_3 - \partial_{t_1}^2 u_4 - \partial_{t_1} u_2 u_3 - 2 u_2 \partial_{t_1} u_3 - u_3 \partial_{t_1} u_2 - [u_2, u_4], \\ 
        &\partial_{t_1} u_4 = \frac{1}{3} \partial_{t_3} u_2 - \frac{1}{3} \partial_{t_1}^3 u_2 - \partial_{t_1}^2 u_3 - \{u_2, \partial_{t_1} u_2\}.
    \end{split}
\end{equation} 
Substituting the above equations together with the expression for $u_3$ from equation (\ref{u_3}) into equation (\ref{m=4}), and setting $u = 2u_2$, we obtain
\begin{equation} \label{ncJM}
    u_{t_4} = \frac{1}{3} u_{t_1t_1t_2} + \frac{2}{3} \partial_{t_1}^{-1} u_{t_2t_3} + \frac{1}{6} [u, u_{t_1t_1}] + \frac{1}{2} \{u, u_{t_2}\} + \frac{1}{2} \{u_{t_1}, \partial_{t_1}^{-1} u_{t_2}\} - \frac{2}{3} [u, \partial_{t_1}^{-1} u_{t_3}], 
\end{equation}
which is the noncommutative version of the $(3 + 1)$-dimensional Date-Jimbo-Kashiwara-Miwa (ncDJKM) equation \cite{jimbo83}.  

\subsection{Reduction of ncKP}
In this section, we show how to construct NC equations in the sub-hierarchy of ncKP. 
The first and the most important example is to obtain the ncKdV equation by using the 2-reduction method, i.e. to set all derivatives of variables $u_k$ with respect to $t_{2n}$ $(n=1,2,\cdots)$ are identically to zero. From equation (\ref{m=2}), if we take $\partial_{t_2}u_2=\partial_{t_2}u_3=0$, then
we can express $u_3$ and $u_4$ in terms of $u_2$ by (c.f. equation \eqref{u_3})
\begin{equation}
    \partial_{t_1} u_3 = - \frac{1}{2} \partial_{t_1}^2 u_2, \quad \partial_{t_1} u_4 = \frac{1}{4} \partial_{t_1}^3 u_2 - \frac{1}{2} \{u_2, \partial_{t_1} u_2\}.
\end{equation}
Substituting the above equations into the first equation of (\ref{m=3}), we obtain
\begin{equation} \label{ncKdV}
    u_{t_3} = \frac{1}{4} u_{t_1t_1t_1} + \frac{3}{4} \{u, u_{t_1}\},
\end{equation} 
which is the noncommutative KdV equation with $u = 2u_2$ \cite{hamanaka05} .
Similarly, we can derive the fifth-order noncommutative KdV equation
\begin{equation}
    u_{t_5} = \frac{1}{16} u_{t_1t_1t_1t_1t_1} + \frac{5}{16} \{u, u_{t_1t_1t_1}\} + \frac{5}{8} (u_{t_1} u_{t_1} + u u u)_{t_1},
\end{equation}
where we have set $u = 2u_2$.

If we consider the $3$-reduction of ncKP, that is, to set $\partial_{t_3}u_2=0$, then in the first equation in (\ref{m=3}), we can get 
\begin{equation}
    \partial_{t_1} u_4 = - \frac{1}{3} \partial_{t_1}^3 u_2 - \partial_{t_1}^2 u_3 - \{u_2, \partial_{t_1} u_2\}.
\end{equation}
Substituting the above equation, along with the expression for $u_3$ from equation (\ref{u_3}), into the second equation in (\ref{m=2}), we obtain
\begin{equation} \label{ncBoussinesq}
    \partial_{t_1}^{-1} u_{t_2t_2} = - \frac{1}{3} u_{t_1t_1t_1} - \{u, u_{t_1}\} + \left[u, \partial_{t_1}^{-1} u_{t_2}\right],
\end{equation}
which is the noncommutative Boussinesq equation with $u = 2u_2$ \cite{hamanaka05}.

\section{Quasi-determinant Solution of NC Integrable Equations}

This section is devoted to the direct method for NC integrable equations by using quasi-determinants.  Hirota's direct method (or so-called bilinear method) tells us that if we take determinant solutions into bilinear equations, then the bilinear equations are exactly the Jacobi identities or Pl\"ucker relations, thereby demonstrating the validity of the solutions for integrable systems. In the NC case, we show that if we take the quasi-determinant solutions into the nonlinear equations, then an identity naturally appears. \emph{Although the equation satisfied by the quasi-determinant is no longer bilinear, the quasi-determinant plays the same role of $\tau$-function in non-commutative integrable system.} This method is important in two folds: (1) we can verify solutions for NC equations; (2) we can construct NC equations from quasi-determinant solutions. Verifications of solutions are common in literature, and in this part we show how to construct NC equations in Section 2 from quasi-determinant solutions.
\subsection{Preliminary: Equivalence between quasi-Wronskian and quasi-Grammian determinants}
There are two types of quasi-determinants used in literature to construct solutions for NC integrable equations. The first takes the form
\begin{equation}
    Q_n^{i, j} =
\left|
\begin{array}{cc}
    \widehat{\Theta} & e_{n-j} \\
    \Theta^{(n+i)} & \fbox{0}
\end{array}
\right|,
\end{equation}
where $e_k$ is the $n$-dimensional unit column vector with 1 in the $k$th row and 0 elsewhere. Moreover, $\Theta=(\theta_1,\cdots,\theta_n)$ is a row vector with seed functions $\theta_1,\cdots,\theta_n$, and $\widehat{\Theta} = \left( \theta_{j}^{(i-1)} \right)_{i,j=1,\dots,n}$ is an $n \times n$ Wronskian matrix with seed functions $\theta_1, \ldots, \theta_n$ satisfying $\theta_j^{(i)}=\partial_{t_1}^i \theta_j$.
The quasi-determinant $Q_n^{0, 0}$ is usually referred to as the quasi-Wronskian determinant, initially proposed in the construction of solutions for equations satisfying Marchenko lemma \cite{etingof97}, and later also used for Darboux transformation \cite{gelfand05}.

Despite the quasi-Wronskian determinant, we have another quasi-determinant called quasi-Grammian determinant, appeared in the process of binary Darboux transformation \cite{gilson07}. It admits the form
\begin{equation} \label{quasigrammian}
    R_n^{i, j} = (-1)^j
\left|
\begin{array}{cc}
    \Omega(\Theta, \Xi) & \Xi^{\dagger (j)} \\
    \Theta^{(i)} & \fbox{0}
\end{array}
\right|,
\end{equation}
where $\Omega(\Theta, \Xi) = \partial_{t_1}^{-1} (\Xi^{\dagger} \Theta)$ and $\Xi^{\dagger}$ denotes the complex conjugate of $\Xi=(\xi_1,\cdots,\xi_n)$. 

\begin{proposition}\label{prop1}
If seed functions $\theta_1,\cdots,\theta_n$ and $\xi_1,\cdots,\xi_n$ satisfy dispersion relations
\begin{align}\label{theta}
 \frac{\partial \theta_j}{\partial t_m} = \frac{\partial^m \theta_j}{\partial {t_1}^m},\quad \frac{\partial \xi_j}{\partial{t_m}}=(-1)^{m-1} \frac{\partial^m \xi_j}{\partial {t_1}^m},
 \quad j=1,\cdots,n,\,m=1,2,\cdots,
\end{align}
then the quasi-Wronskian $Q_n^{i, j}$ and the quasi-Grammian $R_n^{i, j}$ satisfy the same derivative rules.
\end{proposition}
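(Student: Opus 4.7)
The plan is to split the argument into two parts: first, establish the $\partial_{t_1}$-derivative rules for both $Q_n^{i,j}$ and $R_n^{i,j}$ and observe that they take the same form in the indices $(i,j)$; second, use the dispersion relations \eqref{theta} to reduce each $\partial_{t_m}$-derivative to a combination of $\partial_{t_1}$-derivatives and verify that the induced rules again coincide.

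For $\partial_{t_1}$, I would work directly with the bordered block form and apply the noncommutative Jacobi identity (equivalently, a noncommutative Sylvester-type expansion) for quasi-determinants. For $Q_n^{i,j}$, differentiation acts on $\widehat{\Theta}$ by the row shift $\theta_j^{(k-1)}\mapsto\theta_j^{(k)}$ and on $\Theta^{(n+i)}$ by $i\mapsto i+1$, so expanding the resulting bordered matrix by the differentiated row/column produces a formula for $\partial_{t_1}Q_n^{i,j}$ that is a signed sum of quasi-Wronskians with shifted $(i,j)$ plus quadratic corrections of the form $Q_n^{\cdot,\cdot}Q_n^{\cdot,\cdot}$. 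For $R_n^{i,j}$ the analogous calculation uses the identity $\partial_{t_1}\Omega(\Theta,\Xi)=\Xi^\dagger\Theta$ that is built into the definition of $\Omega$; this is a rank-one update of the $\Omega$-block, and the same Jacobi-type expansion gives a formula for $\partial_{t_1}R_n^{i,j}$ with identical index shifts and quadratic structure.

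To promote these to general $t_m$, the dispersion relations do the bulk of the work. For $Q_n^{i,j}$ the hypothesis $\partial_{t_m}\theta_j=\partial_{t_1}^m\theta_j$ says that $\partial_{t_m}$ acts on each entry of $\widehat{\Theta}$ exactly as $\partial_{t_1}^m$, so the $t_m$-rule is obtained by applying the quasi-Jacobi identity to blocks whose rows have been shifted $m$ times. For $R_n^{i,j}$ the nontrivial step is controlling $\partial_{t_m}\Omega$, which a priori involves the antiderivative $\partial_{t_1}^{-1}$. Using the $\theta$- and the sign-alternating $\xi$-dispersion relations together with integration by parts, one obtains the telescoped identity
\[
\partial_{t_m}\Omega(\Theta,\Xi)=\sum_{k=0}^{m-1}(-1)^k(\partial_{t_1}^k\Xi^\dagger)(\partial_{t_1}^{m-1-k}\Theta),
\]
which is a finite rank-one update of $\Omega$ free of any antiderivative. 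Substituting this into the bordered form \eqref{quasigrammian} and applying the same quasi-determinant expansion as in the $t_1$ case yields a derivative rule for $R_n^{i,j}$ in exactly the symbolic format found for $Q_n^{i,j}$.

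The main obstacle is the combinatorial bookkeeping at the last step: one must show that the $m$ rank-one updates appearing in $\partial_{t_m}\Omega$, after being expanded via the noncommutative Jacobi identity, reassemble into the same formal expression in shifted $R_n^{i',j'}$ as the one produced for $Q_n^{i',j'}$. The alternating sign $(-1)^{m-1}$ in the $\xi$-dispersion relation is essential here: it is precisely what cancels the cross terms in the telescoping argument above, leaving only the endpoint contributions that mirror the shifts appearing in the quasi-Wronskian derivative rule. Once this matching is established, the proposition follows by induction on $m$.
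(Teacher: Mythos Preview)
Your key technical ingredient --- the telescoped identity
\[
\partial_{t_m}\Omega(\Theta,\Xi)=\sum_{k=0}^{m-1}(-1)^k\,\Xi^{\dagger(k)}\,\Theta^{(m-1-k)},
\]
which turns $\partial_{t_m}\Omega$ into a finite sum of rank-one outer products and removes the antiderivative --- is exactly what the paper relies on (implicitly) when it feeds $\partial_{t_m}\Omega$ into the Grammian-type derivative formula \eqref{dquasigrammian}. So the heart of your argument agrees with the paper's.

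The organization differs, however. The paper does not treat $m=1$ first and then ``induct on $m$''. It computes $\partial_{t_m}$ directly, for arbitrary $m$, by inserting the dispersion relations into the two quasi-determinant derivative formulas \eqref{dquasiwronskian} and \eqref{dquasigrammian} from the appendix. This yields the explicit closed form
\[
\partial_{t_m}Q_n^{i,j}=Q_n^{i+m,j}-Q_n^{i,j+m}+\sum_{k=0}^{m-1}Q_n^{i,k}Q_n^{m-1-k,j},
\]
and the identical expression for $R_n^{i,j}$, after which there is nothing left to match. Your two-step plan (do $t_1$, then build up) is not wrong, but it adds scaffolding that the direct computation avoids; in particular no induction is needed, since each $m$ is a single application of the derivative formula.

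One point you pass over, which the paper does make explicit, is the simplification on the $Q$-side. Applying \eqref{dquasiwronskian} initially produces a sum $\sum_{k=0}^{n-1}Q_n^{i,k}Q_n^{m-1-k,j}$ indexed by the rows of $\widehat{\Theta}$, and one must observe that when the first superscript of the second factor is negative the corresponding row already sits inside $\widehat{\Theta}$, forcing $Q_n^{m-1-k,j}$ to be either $0$ or $-1$; this is what collapses the sum to $m$ terms and supplies the $-Q_n^{i,j+m}$ contribution. Without this step the $Q$- and $R$-formulas do not visibly coincide. Your invocation of a ``noncommutative Jacobi/Sylvester expansion'' is in the right spirit, but what is actually used are the specific derivative rules \eqref{dquasiwronskian}--\eqref{dquasigrammian}, not a Jacobi identity per se.
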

\begin{proof}
For any $m$, using linear relations (\ref{theta}) and the derivative formula (\ref{dquasiwronskian}), we obtain
\begin{align*}
    \frac{\partial}{\partial t_m} Q_n^{i, j} 
    &= \left| \begin{array}{cc}
    \widehat{\Theta} & e_{n-j} \\
    \Theta^{(n+i+m)} & \boxed{0}
    \end{array} \right| 
    + \sum_{k=1}^{n} 
    \left| \begin{array}{cc}
    \widehat{\Theta} & e_k \\
    \Theta^{(n+i)} & \boxed{0}
    \end{array} \right|
    \left| \begin{array}{cc}
    \widehat{\Theta} & e_{n-j} \\
    \Theta^{(k-1+m)} & \boxed{0}
    \end{array} \right| \\
    &= Q_n^{i + m, j} + \sum_{k=0}^{n-1} Q_n^{i,k} Q_n^{m-1-k,j}.
\end{align*}
For $k<m-1$ and $m-k+j\neq0$, it is easily checked that $Q_n^{m-1-k,j}=0$, and for $m-k+j=0$, we have 
$Q_n^{m-1-k,j}=-1$, then 
\begin{equation}\label{deri_quasi}
    \frac{\partial}{\partial t_m} Q_n^{i, j} = Q_n^{i + m, j} - Q_n^{i, j + m} + \sum_{k=0}^{m-1} Q_n^{i, k} Q_n^{m - k - 1, j}.
\end{equation}
On the other hand, by using (\ref{dquasigrammian}), the derivative of $R_n^{i, j}$ with respect to $t_m$ could be computed as
\begin{equation}
    \begin{split}
    \frac{\partial}{\partial t_m} R_n^{i, j} &= (-1)^j \begin{vmatrix}
        \Omega & \Xi^{\dagger(j)} \\
        \Theta^{(i+m)} & \boxed{0} 
        \end{vmatrix} + (-1)^{m+j-1} \begin{vmatrix}
        \Omega & \Xi^{\dagger(j+m)} \\
        \Theta^{(i)} & \boxed{0} 
        \end{vmatrix} \\
        &\quad\quad + \sum_{k=0}^{m-1} \begin{vmatrix}
        (-1)^{j+k} \Omega & \Xi^{\dagger(k)} \\
        \Theta^{(i)} & \boxed{0} 
        \end{vmatrix} \begin{vmatrix}
        \Omega & \Xi^{\dagger(j)} \\
        \Theta^{(m-1-k)} & \boxed{0} 
        \end{vmatrix} \\
        &\quad = R_n^{i+m, j} - R_n^{i, j+m} + \sum_{k=0}^{m-1} R_n^{i, k}R_n^{m-k-1, j}.
    \end{split}
\end{equation}
\end{proof}
It is important to note that although the derivative rules for the quasi-Wronskian and quasi-Grammian are identical, the corresponding seed functions $\Theta$ and $\Xi$ may be different.
However, according to the Proposition \ref{prop1}, the construction of NC integrable systems from the quasi-Wronskian solution is enough. The equations satisfied by quasi-Wronskian are equally valid for the quasi-Grammian solutions. In the subsequent, we construct ncKP equations from $Q_n^{0, 0}$.

\subsection{Construction of ncKP from quasi-determinants}
In the commutative case, the construction of the KP equation via the $\tau$-function ultimately yields a determinant identity (namely, the Plücker relation or Jacobi identity). In contrast, within the noncommutative framework, employing the quasi-Wronskian $Q_n^{0, 0}$ to derive the ncKP equation leads to a complete cancellation of all $Q$-terms. This phenomenon substantiates the observation made by Gilson and Nimmo, wherein the identities required for computing derivatives of quasi-determinants are inherently encoded within the structure of the quasi-determinants themselves. Consequently, our approach to formulating the ncKP equation begins with identifying a linear combination of derivatives of $Q_n^{0, 0}$ that vanishes identically. For notational convenience, we set $v = Q_n^{0, 0}$. 

To facilitate this procedure, we begin by computing the derivatives of $v$, see Appendix B. 
We know the derivatives $v_{t_1t_3}$, $v_{t_2t_2}$, and $v_{t_1t_1t_1t_1}$ contain monomials $Q_n^{i, j}$ up to $i+j=4$, and we list the coefficient below. 
\begin{equation} \nonumber
    \renewcommand{\arraystretch}{1.25}
    \begin{array}{|c|c|c|c|}
        \hline
         & v_{t_1t_3} & v_{t_2t_2} & v_{t_1t_1t_1t_1} \\ \hline
        Q_n^{4, 0}(Q_n^{0, 4}) & 1 & 1 & 1 \\ \hline
        Q_n^{2, 2} & 0 & -2 & 6 \\ \hline
        Q_n^{3, 1}(Q_n^{1, 3}) & -1 & 0 & -4 \\ \hline
    \end{array}
\end{equation}
In order to eliminate these terms, it is necessary to construct an appropriate linear combination $x_1v_{t_1t_3}+x_2v_{t_2t_2}+x_3v_{t_1t_1t_1t_1}$, which amounts to solving the following system of linear equations
\begin{equation} \nonumber
\left\{
\begin{array}{rrrrrl}
x_1 & +\; x_2 & +\;  x_3 & =\; 0, \\
x_1 & \;   & +\; 4x_3 & =\; 0, \\
& \; 2x_2 & -\;  6x_3 & =\; 0.
\end{array}
\right.
\end{equation} 
Solving the above system yields $(x_1,x_2,x_3)=(-4,3,1)$, and we have
\begin{align} \hspace{-2cm} \label{part_1}
\begin{aligned}
   & -4v_{t_1t_3} + 3v_{t_2t_2} + v_{t_1t_1t_1t_1} \\
    &\quad= 12[Q_n^{2, 0} - Q_n^{1, 1}][Q_n^{1, 0} - Q_n^{0, 1}] + 12[Q_n^{1, 0} - Q_n^{0, 1}][Q_n^{0, 2} - Q_n^{1, 1}] + \cdots.
    \end{aligned}
\end{align}
Examining the right-hand side of the above equation, we find that in order to further eliminate nonlinear terms such as $Q_n^{1, 0}Q_n^{2, 0}$ and $Q_n^{2, 0}Q_n^{1, 0}$, it is necessary to consider an appropriate linear combination of $v_{t_1} v_{t_1t_1}$, $v_{t_1} v_{t_2}$, $v_{t_1t_1} v_{t_1}$, and $v_{t_2} v_{t_1}$, for simplicity, we denote $-4v_{t_1t_3} + 3v_{t_2t_2} + v_{t_1t_1t_1t_1}$ by $\star$
\begin{align*} 
    \begin{array}{|c|c|c|c|}
        \hline
        & v_{t_1} v_{t_1t_1} & v_{t_1} v_{t_2} & \star \\ \hline
        Q_n^{1, 0}Q_n^{2, 0}(Q_n^{0, 1}Q_n^{2, 0}) & -1(1) & -1(1) & 0 \\ \hline
        Q_n^{1, 0}Q_n^{1, 1}(Q_n^{0, 1}Q_n^{1, 1}) & -2(2) & 0 & -12(12) \\ \hline
        Q_n^{1, 0}Q_n^{0, 2}(Q_n^{0, 1}Q_n^{0, 2}) & 1(-1) & -1(1) & 12(-12) \\ \hline
    \end{array} 
    \end{align*}
    \begin{align*}
    \begin{array}{|c|c|c|c|}
        \hline
        & v_{t_1t_1} v_{t_1} & v_{t_2} v_{t_1} & \star \\ \hline
        Q_n^{2, 0}Q_n^{1, 0}(Q_n^{2, 0}Q_n^{0, 1}) & 1(-1) & 1(-1) & 12(-12) \\ \hline
        Q_n^{1, 1}Q_n^{1, 0}(Q_n^{1, 1}Q_n^{0, 1}) & -2(2) & 0 & -12(12) \\ \hline
        Q_n^{0, 2}Q_n^{1, 0}(Q_n^{0, 2}Q_n^{0, 1}) & 1(-1) & -1(1) & 0 \\ \hline
    \end{array}
\end{align*}
It is readily seen that adding the term $-6v_{t_1}(v_{t_1t_1} - v_{t_2}) - 6(v_{t_1t_1} + v_{t_2})v_{t_1}$ to the left-hand side of equation (\ref{part_1}) eliminates the nonlinear terms listed in the table above. Fortunately, the right-hand side vanishes in this case, and thus we obtain
\begin{equation} \label{nckp}
    -4v_{t_1t_3} + 3v_{t_2t_2} + v_{t_1t_1t_1t_1} - 6\{v_{t_1}, v_{t_1t_1}\} + 6[v_{t_1}, v_{t_2}] = 0,
\end{equation}
which is just the ncKP equation (\ref{ncKP}) with $-2v_{t_1} = u$. 

Furthermore, the same procedure can be systematically applied to derive higher-order noncommutative KP equations. We know the derivatives $v_{t_1t_1t_1t_2}$, $v_{t_2t_3}$, and $v_{t_1t_4}$ contain monomials $Q_n^{i, j}$ up to $i+j=5$, and we list the coefficient below.  
\begin{equation} \nonumber
    \renewcommand{\arraystretch}{1.25}
    \begin{array}{|c|c|c|c|}
        \hline
         & v_{t_1t_1t_1t_2} & v_{t_2t_3} & v_{t_1t_4} \\ \hline
        Q_n^{5, 0}(Q_n^{0, 5}) & 1 & 1 & 1 \\ \hline
        Q_n^{4, 1}(Q_n^{1, 4}) & -3 & 0 & -1 \\ \hline
        Q_n^{3, 2}(Q_n^{2, 3}) & 2 & -1 & 0 \\ \hline
    \end{array}
\end{equation}
In order to eliminate these terms, it is necessary to construct an appropriate linear combination $x_1v_{t_1t_1t_1t_2}+x_2v_{t_2t_3}+x_3v_{t_1t_4}$. Thus, we have the following linear equations
\begin{equation} \nonumber
    \left\{
    \begin{array}{rrrrrl}
    x_1 & +\; x_2 & +\;  x_3 & =\; 0, \\
    3x_1 & \;   & +\; x_3 & =\; 0, \\
    2x_1 & -\; x_2 & \;   & =\; 0.
    \end{array}
    \right.
\end{equation}
Solving the above system, we obtain
\begin{align}  \label{part_2}
    \begin{aligned}
       &v_{t_1t_1t_1t_2} + 2 v_{t_2t_3} - 3v_{t_1t_4} = 6[Q_n^{3, 0} - Q_n^{1, 2}][Q_n^{1, 0} - Q_n^{0, 1}] - 6[Q_n^{1, 0} - Q_n^{0, 1}][Q_n^{2, 1} - Q_n^{0, 3}] \\
       & \qquad\qquad\qquad\qquad\quad - 6\{Q_n^{1, 1}, Q_n^{2, 0} - Q_n^{0, 2}\} + 6Q^{2, 0}Q_n^{2, 0} - 6Q^{0, 2}Q_n^{0, 2} + \cdots. 
    \end{aligned}
\end{align}
To eliminate the nonlinear terms on the right-hand side of the above equation, such as 
$Q_n^{3, 0}Q_n^{1, 0}$, $Q_n^{0, 1}Q_n^{0, 3}$, and $Q_n^{2, 0}Q_n^{0, 2}$, we need to consider the linear combination of $v_{t_1t_1t_1} v_{t_1}$, $v_{t_1t_2} v_{t_1}$, $v_{t_3} v_{t_1}$, $v_{t_1} v_{t_1t_1t_1}$, $v_{t_1} v_{t_1t_2}$, $v_{t_1} v_{t_3}$, $v_{t_1t_1} v_{t_2}$, and $v_{t_2} v_{t_1t_1}$, for simplicity, we denote $v_{t_1t_1t_1t_2} + 2 v_{t_2t_3} - 3v_{t_1t_4}$ by $\star\star$, and we have
\begin{align*}
    \begin{array}{|c|c|c|c|c|}
        \hline
         & v_{t_1t_1t_1} v_{t_1} & v_{t_1t_2} v_{t_1} & v_{t_3} v_{t_1} & \star\star \\ \hline
        Q_n^{3, 0}Q_n^{1, 0}(Q_n^{3, 0}Q_n^{0, 1}) & 1(-1) & 1(-1) & 1(-1) & 6(-6) \\ \hline
        Q_n^{2, 1}Q_n^{1, 0}(Q_n^{2, 1}Q_n^{0, 1}) & -3(3) & -1(1) & 0 & 0 \\ \hline
        Q_n^{1, 2}Q_n^{1, 0}(Q_n^{1, 2}Q_n^{0, 1}) & 3(-3) & -1(1) & 0 & -6(6) \\ \hline
        Q_n^{0, 3}Q_n^{1, 0}(Q_n^{0, 3}Q_n^{0, 1}) & -1(1) & 1(-1) & -1(1) & 0 \\ \hline
    \end{array}
\end{align*}
\begin{align*}
    \begin{array}{|c|c|c|c|c|}
        \hline
         & v_{t_1} v_{t_1t_1t_1} & v_{t_1} v_{t_1t_2} & v_{t_1} v_{t_3} & \star\star \\ \hline
        Q_n^{1, 0}Q_n^{3, 0}(Q_n^{0, 1}Q_n^{3, 0}) & 1(-1) & 1(-1) & 1(-1) & 0 \\ \hline
        Q_n^{1, 0}Q_n^{2, 1}(Q_n^{0, 1}Q_n^{2, 1}) & -3(3) & -1(1) & 0 & -6(6) \\ \hline
        Q_n^{1, 0}Q_n^{1, 2}(Q_n^{0, 1}Q_n^{1, 2}) & 3(-3) & -1(1) & 0 & 0 \\ \hline
        Q_n^{1, 0}Q_n^{0, 3}(Q_n^{0, 1}Q_n^{0, 3}) & -1(1) & 1(-1) & -1(1) & 6(-6) \\ \hline
    \end{array} 
\end{align*}
\begin{align*}
    \begin{array}{|c|c|c|c|c|}
        \hline
         & v_{t_1t_1} v_{t_2} & v_{t_2} v_{t_1t_1} & \star\star \\ \hline
        Q_n^{2, 0}Q_n^{2, 0}(Q_n^{0, 2}Q_n^{0, 2}) & 1(-1) & 1(-1) & 6(-6) \\ \hline
        Q_n^{2, 0}Q_n^{0, 2}(Q_n^{0, 2}Q_n^{2, 0}) & -1(1) & 1(-1) & 0 \\ \hline
        Q_n^{1, 1}Q_n^{2, 0}(Q_n^{1, 1}Q_n^{0, 2}) & -2(2) & 0 & -6(6) \\ \hline
        Q_n^{2, 0}Q_n^{1, 1}(Q_n^{0, 2}Q_n^{1, 1}) & 0 & -2(2) & -6(6) \\ \hline
    \end{array}
\end{align*}
From the table, we observe that the right-hand side of equation (\ref{part_2}) can be eliminated by the linear combination $(v_{t_1t_1t_1} - 3v_{t_1t_2} - 4v_{t_3})v_{t_1} - v_{t_1}(v_{t_1t_1t_1} + 3v_{t_1t_2} - 4v_{t_3}) - 3\{v_{t_1t_1}, v_{t_2}\}$. As a result, we arrive at an equation with a vanishing right-hand side
\begin{align} 
        v_{t_1t_1t_1t_2} + 2 v_{t_2t_3} - 3v_{t_1t_4} - [v_{t_1}, v_{t_1t_1t_1}] - 3\{v_{t_1}, v_{t_1t_2}\} + 4[v_{t_1}, v_{t_3}] - 3\{v_{t_1t_1}, v_{t_2}\} = 0, 
\end{align}
which is just the ncDJKM equation (\ref{ncJM}) with $-2v_{t_1} = u$. A broader class of higher-order equations in the ncKP hierarchy can also be derived using the same method. To conclude, we can have the following theorem.
\begin{theorem}\label{thm_ncKP}
If seed functions $\theta_i$ satisfy dispersion relations 
\begin{align}\label{dispersion}
\frac{\partial \theta_i}{\partial t_m}=\frac{\partial^m \theta_i}{\partial t_1^m}, 
\end{align}
then the quasi-Wronskian determinant $Q_n^{0,0}=\left|
\begin{array}{cc}
    \widehat{\Theta} & e_{n} \\
    \Theta^{(n)} & \boxed{0}
\end{array}
\right|$ is a solution of ncKP hierarchy. 
\end{theorem}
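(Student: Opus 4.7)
The strategy is to combine the derivative formula established in Proposition \ref{prop1} with the explicit cancellation mechanism used in Section 3.2. By (\ref{deri_quasi}), under the dispersion relations (\ref{dispersion}), every $t_m$-derivative of $Q_n^{i,j}$ reduces to a polynomial in the family $\{Q_n^{i',j'}\}$. Iterating this rule, every multi-variable derivative of $v = Q_n^{0,0}$ lies in the same polynomial ring. Hence the $m$-th equation of the ncKP hierarchy, rewritten in terms of $v$ via $u_2 = -2v_{t_1}$ together with the recursive expressions for $u_3, u_4,\ldots$ coming from (\ref{u_3}) and its higher analogues, becomes an algebraic identity to be checked among the $Q_n^{i,j}$.

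The first step of the plan is to verify the cancellation flow by flow. The paper already exhibits this for $m = 3$ (ncKP) and $m = 4$ (ncDJKM). For a general flow $t_m$, I would follow the same template: first, select a candidate linear combination of the top-weight derivatives of $v$ with total weight $m+1$, and tune its coefficients so that the $Q$-monomials of maximal weight $i + j = m+1$ cancel; second, absorb the resulting quadratic monomials by adding suitable bilinear terms of the form $\{v_{t_\alpha}, v_{t_\beta}\}$ and $[v_{t_\alpha}, v_{t_\beta}]$; third, iterate if higher-degree monomials remain. Because the derivative rule (\ref{deri_quasi}) is weight-graded, each adjustment strictly reduces either the maximum weight or the maximum degree of the residual terms, so the procedure terminates.

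The principal obstacle is to show, in a completely flow-uniform way, that the cancellation algorithm reproduces exactly the $m$-th equation arising from the Lax representation (\ref{lax_equation}), rather than merely some identity compatible with it. A clean way to bypass ad hoc bookkeeping is to invoke the Darboux-transformation origin of the quasi-Wronskian: as in Gilson--Nimmo \cite{gilson07}, $Q_n^{0,0}$ arises from iterated Darboux transformation applied to the trivial seed $L_0 = \partial_{t_1}$, which tautologically satisfies $\partial_{t_m} L_0 = [(L_0^m)_{\geq 0}, L_0] = 0$ for all $m$. Since Darboux transformations intertwine the Lax flows of the hierarchy, the dressed operator $\tilde{L}$ whose coefficients are built from $Q_n^{0,0}$ satisfies the full ncKP hierarchy. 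Matching this abstract conclusion against the explicit cancellations of Section 3.2 then identifies the specific polynomial identities among the $Q_n^{i,j}$ that encode each individual flow, and completes the proof.
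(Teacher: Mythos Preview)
Your proposal is sound and, for the explicit low-order cases ($m=3,4$), follows exactly the paper's route: compute the $t_m$-derivatives of $v=Q_n^{0,0}$ via the rule (\ref{deri_quasi}), tabulate the coefficients of the top-weight monomials $Q_n^{i,j}$, solve a small linear system to annihilate them, and then absorb the remaining quadratic terms with suitable (anti-)commutators of lower derivatives of $v$. That is precisely what the paper does in Section~3.2, and its ``proof'' of Theorem~\ref{thm_ncKP} is nothing more than those two worked examples together with the sentence ``A broader class of higher-order equations in the ncKP hierarchy can also be derived using the same method.''

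Where you differ---and where your write-up is actually stronger---is in confronting the passage from $m=3,4$ to all $m$. The paper simply asserts that the cancellation mechanism persists; you correctly flag that the weight-graded elimination procedure, as stated, does not by itself guarantee that the resulting identity coincides with the $m$-th Lax equation rather than some other relation. Your proposed remedy, invoking the Darboux-transformation construction of \cite{gilson07} (iterated dressing of the trivial seed $L_0=\partial_{t_1}$, which intertwines all the flows), is a legitimate and standard way to close that gap, and it is not present in the paper. What the paper's approach buys is an explicit, constructive recipe for writing down each NC equation directly from the quasi-determinant; what your Darboux supplement buys is an actual proof that the full hierarchy is satisfied. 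One small correction: the identification is $u=2u_2=-2v_{t_1}$, so $u_2=-v_{t_1}$, not $-2v_{t_1}$.
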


\subsection{Matrix Solutions of ncKP}
In the Theorem \ref{thm_ncKP}, we show the solution of ncKP could be expressed in terms of quasi-Wronskian determinants. In this part, we show how to take proper seed functions to get the one- and two-soliton matrix solutions of the ncKP equation.

Similar to the commutative case, we can take the seed functions for soliton solutions admitting the expression
\begin{align*}
\theta_i=1+\alpha_i e^{k_it_1+k_i^2t_2+k_i^3t_3+\eta_i^0},
\end{align*}
where $\alpha_i,\,k_i$ and $\eta_i^0$ are elements in a skew field and $1$ is the unit in the skew field. 
Obviously, these seed functions satisfy dispersion relations \eqref{dispersion}.
In the case $n=1$, we obtain a noncommutative one-soliton solution. Expanding the quasi-determinant gives
\begin{align*}
        v = \left|\begin{array}{cc}
        \theta_1&1\\
        \theta_1'&\boxed{0}
        \end{array}
        \right|=-\theta_1'\theta_1^{-1}=-\alpha_1k_1e^{k_1t_1+k_1^2t_2+k_1^3t_3+\eta_1^0}(1+\alpha_1e^{k_1t_1+k_1^2t_2+k_1^3t_3+\eta_1^0})^{-1}.
\end{align*}        
This expression can hardly be simplified if we don't take values for coefficients $\alpha_1$ and $k_1$, and $\eta_1^0$. A special choice of the seed function is consider a matrix-valued representation for the 
division ring, such as matrix representation for quaternions. For example, 
we can take seed function $\theta_i$ as a $d\times d$ dimensional matrix-valued function which admits the form\footnote{We choose this seed function to coincide our results with the one given in  \cite[Eq.18]{gilson09} although their results are expressed in terms of quasi-Grammian determinant. Of course, we can also consider the seed function $\theta_i=I_d+r_iP_ie^{k_it_1+k_i^2t_2+k_i^3t_3+\eta_i^0}$.}
\begin{equation} \label{seed-kp}
    \theta_i = r_i P_i \mathrm{e}^{ k_i t_1 + k_i^2 t_2 + k_i^3 t_3 + \eta_i^0} + I_d \mathrm{e}^{q_i t_1 + q_i^2 t_2 + q_i^3 t_3 + \gamma_i^0}
\end{equation}
where  $r_i, \, k_i,\,q_i,\,\eta_0$ and $\gamma_0$ are scalars, $P_i$ is a $d \times d$ projection matrix, and $I_d$ denotes the $d \times d$ identity matrix.
Therefore, we can get a matrix-valued 1-soliton solution
\begin{equation}
    \begin{split}
        v &= \left|\begin{array}{cc}
        \theta_1&I_d\\
        \theta_1'&\boxed{0_d}
        \end{array}
        \right|=- \left(q_1 I_d + k_1r_1P_1 \mathrm{e}^{\eta_1-\gamma_1}\right) \left(I_d + r_1P_1\mathrm{e}^{\eta_1-\gamma_1}\right)^{-1} \\
        &= -\frac{r_1 \mathrm{e}^{\eta_1-\gamma_1}}{1 + r_1 \mathrm{e}^{\eta_1-\gamma_1}} \left(k_1-q_1\right)P_1 - q_1I_d.
    \end{split}
\end{equation}
where we used simplified notations $\gamma_1=q_1t_1+q_1^2t_2+q_1^3t_3+\gamma_1^0$ and $\eta_1=k_1t_1+k_1^2t_2+k_1^3t_3+\eta_1^0$, and $0_d$ is the $d\times d$ null matrix.
Besides, we used the property of projective matrix $P^2=P$, which induces the formula $\left(I - aP\right)^{-1} = I +a(1-a)^{-1} P$, by assuming $a \neq 1$. Then the matrix-valued 1-soliton solution of (\ref{ncKP}) is given by
\begin{equation} \label{1-soliton}
    u = -2v_{t_1} = \frac{1}{2} \left(k_1-q_1\right)^2 P_1 \mathrm{sech}^2 \left(\frac{\eta_1-\gamma_1+ \mathrm{log} r_1}{2}\right),
\end{equation}
as a matrix-valued generalization of the original sech function solution.

For a matrix-valued 2-soliton solution,  we could use a similar argument, from which we have
\begin{equation}
    \begin{split}
    v &= \left|\begin{array}{ccc}
        \theta_1&\theta_2&0_d\\
        \theta_1'&\theta_2'&I_d\\
       \theta_1''&\theta_2''& \boxed{0_d}
        \end{array}
        \right|=-\begin{pmatrix} \theta_1'' & \theta_2''\end{pmatrix}\begin{pmatrix} \theta_1&\theta_2 \\ \theta_1'&\theta_2'\end{pmatrix}^{-1} \begin{pmatrix} 0_d \\ I_d \end{pmatrix} \\
    & = - \begin{pmatrix} K_1 & K_2 \end{pmatrix} \begin{pmatrix} 0_d \\ I_d \end{pmatrix} = - K_2,
    \end{split}
\end{equation}
 where $K_1$ and $K_2$ satisfy
\begin{equation}
    \begin{split}
        &K_1\theta_1 + K_2\theta_1' = \theta_1'', \\
        &K_1\theta_2 + K_2\theta_2'= \theta_2''.
    \end{split}
\end{equation}
Solving for $K_2$ gives
\begin{align} \label{2-soliton}
\begin{aligned}
    K_2 &=(\theta_1''\theta_1^{-1}-\theta_2''\theta_2^{-1})(\theta_1'\theta_1^{-1}-\theta_2'\theta_2^{-1})^{-1}
    \\
   &=  \frac{\left[k_2\left(q_2^2-q_1^2\right)-\alpha k_2\left(k_1^2-q_1^2\right)+\beta a_2\left(k_1^2+q_2k_2\right)\right]I_d-\beta a_2\left(k_1^2-q_1^2\right)\left(I_d-\alpha P_1\right)P_2}{k_2\left(q_2-q_1\right)-\alpha k_2a_1+\beta q_1 a_2+\alpha \beta a_1a_2},
    \end{aligned}
\end{align}
where 
\begin{equation}
    \alpha = \frac{r_1 \mathrm{e}^{\eta_1-\gamma_1}}{1+r_1 \mathrm{e}^{\eta_1-\gamma_1}}, \qquad \beta =  \frac{r_2k_2 \mathrm{e}^{\eta_2-\gamma_2}}{q_2+r_2k_2 \mathrm{e}^{\eta_2-\gamma_2}}, \qquad a_i = k_i-q_i.
\end{equation}
Therefore, the matrix-valued 2-soliton solution of equation (\ref{ncKP}) is then given by $u = -2v_{t_1} = 2 \partial_{t_1} K_2$.

In fact, the matrix-valued soliton solutions of the ncDJKM equation share the same structures as those of the ncKP equation. By appropriately modifying $\eta_i$ and $\gamma_i$, equations (\ref{1-soliton}) and (\ref{2-soliton}) yield the matrix-valued 1- and 2-soliton solutions for the ncDJKM equation, respectively.

\section{Direct reduction method from ncKP hierarchy}
In this section, we demonstrate that by making appropriate constraints to seed functions, 
some sub-hierarchy of ncKP hierarchy would be obtained, directly from the constraints of solution for nonlinear equations. 
We mainly consider the 2-reduction of ncKP hierarchy (i.e. ncKdV hierarchy) and 3-reduction of ncKP hierarchy (i.e. ncBoussinesq hierarchy). 

\subsection{2-Reduction of ncKP}
If $v = Q_n^{0, 0}$ independent of the $t_2$-flow, then $v_{t_2}$ vanishes, and the ncKP equation (\ref{nckp}) can be seen to reduce to the ncKdV equation
\begin{equation}
    -4v_{t_1t_3} + v_{t_1t_1t_1t_1} - 6\{v_{t_1}, v_{t_1t_1}\} = 0.
\end{equation}
This is consistent with equation (\ref{ncKdV}) upon identifying $-2v_{t_1} = u$.

Therefore, to get the solution for ncKdV equation as well as ncKP equation, it is necessary to impose some constraints into the seed functions such that the derivative of $Q_n^{0, 0}$ with respect to $t_2$ vanishes. Such a result is stated as the following theorem. 
\begin{theorem} \label{thm1}
    If the seed functions $\theta_1,\cdots,\theta_n$ satisfy dispersion relations \eqref{dispersion}
   and the constraint 
   \begin{align}\label{dis_kdv}
   \partial_{t_1}^2 \theta_i=\theta_i k_i^2,
   \end{align} then
    we have $\partial_{t_2}Q_n^{0,0}=0$.
Here $k_i$ is a right multiplier in a skew field. 
\end{theorem}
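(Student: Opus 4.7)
The plan is to convert the constraint into a statement about the $t_2$-dependence of each seed function and then exploit the column-multiplication invariance of quasi-determinants. The resulting argument is entirely structural and bypasses any explicit expansion via the derivative formula \eqref{deri_quasi}.

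First, I would combine the dispersion relation $\partial_{t_2}\theta_i=\partial_{t_1}^2\theta_i$ with the hypothesis $\partial_{t_1}^2\theta_i=\theta_i k_i^2$ to obtain the first-order ODE $\partial_{t_2}\theta_i=\theta_i k_i^2$. Integrating in $t_2$ gives the right-factorization $\theta_i(t_1,t_2,t_3,\ldots)=\tilde\theta_i(t_1,t_3,\ldots)\exp(t_2 k_i^2)$, where $\tilde\theta_i$ denotes the slice at $t_2=0$ and is independent of $t_2$. Because $t_1$ and $t_2$ are independent coordinates and $\exp(t_2 k_i^2)$ is $t_1$-independent, every $t_1$-derivative inherits the same right factor: $\theta_i^{(m)}=\tilde\theta_i^{(m)}\exp(t_2 k_i^2)$ for all $m\geq 0$.

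Next, I would observe that the $(n+1)\times(n+1)$ augmented matrix $M$ defining $Q_n^{0,0}$ has its first $n$ columns given by $(\theta_j,\theta_j',\ldots,\theta_j^{(n)})^{T}$ and its last column equal to $e_n$. The factorization above shows that columns $1,\ldots,n$ of $M$ coincide with those of the analogous matrix $\widetilde M$ built from $\tilde\theta_j$, each right-multiplied by the single scalar $\exp(t_2 k_j^2)$, while the auxiliary column $e_n$ is unchanged. Equivalently, $M=\widetilde M\cdot E$ with $E=\mathrm{diag}(\exp(t_2 k_1^2),\ldots,\exp(t_2 k_n^2),1)$. I would then invoke the standard invariance property of quasi-determinants: right-multiplication of any column other than the boxed column by an invertible element leaves the quasi-determinant unchanged. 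This can be read off directly from the Schur-complement expression $Q_n^{0,0}=-\Theta^{(n)}\widehat{\Theta}^{-1}e_n$, since the right $E$-factor coming from $\Theta^{(n)}$ and the induced right $E$-factor on $\widehat{\Theta}$ annihilate through the inverse. Consequently $Q_n^{0,0}(\theta)=Q_n^{0,0}(\tilde\theta)$, and the latter is manifestly $t_2$-independent, so $\partial_{t_2}Q_n^{0,0}=0$.

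The main obstacle is making the invariance step watertight in the noncommutative setting: one must (i) confirm that $\exp(t_2 k_i^2)$ is well-defined and invertible in the underlying skew field, or more typically in a suitable formal-power-series extension, and (ii) trace the cancellation carefully in the Schur complement, where the right $E$-factor on $\Theta^{(n)}$ must commute past nothing before meeting $E^{-1}$ from $\widehat{\Theta}^{-1}$. An alternative, more computational route would apply the derivative identity \eqref{deri_quasi} with $m=2$ to write
\[
\partial_{t_2}Q_n^{0,0}=Q_n^{2,0}-Q_n^{0,2}+Q_n^{0,0}Q_n^{1,0}+Q_n^{0,1}Q_n^{0,0},
\]
and then show term-by-term that the constraint $\theta_j^{(m+2)}=\theta_j^{(m)}k_j^2$ forces these four quasi-determinants to cancel pairwise. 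This is feasible but laborious, whereas the factorization argument above yields an immediate, structural reason why the $t_2$-flow acts trivially on $Q_n^{0,0}$.
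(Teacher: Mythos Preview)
Your argument is correct, and at the level of the key cancellation it coincides with the paper's: both rely on the Schur-complement expression $Q_n^{0,0}=-\Theta^{(n)}\widehat{\Theta}^{-1}e_n$ and on the fact that the constraint produces a common right diagonal factor that drops out through the inverse. The difference is one of packaging. The paper works \emph{infinitesimally}: it differentiates the Schur complement directly to get
\[
\partial_{t_2}Q_n^{0,0}=-\Theta^{(n+2)}\widehat{\Theta}^{-1}e_n+\Theta^{(n)}\widehat{\Theta}^{-1}\widehat{\Theta}_{t_2}\widehat{\Theta}^{-1}e_n,
\]
then uses the constraint to write $\Theta^{(n+2)}=\Theta^{(n)}D$ and $\widehat{\Theta}_{t_2}=\widehat{\Theta}D$ with $D=\operatorname{diag}(k_1^2,\ldots,k_n^2)$, so the two terms cancel on the nose. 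You instead work \emph{integrally}, solving $\partial_{t_2}\theta_i=\theta_i k_i^2$ to factor $\theta_i=\tilde\theta_i\exp(t_2k_i^2)$ and then invoking column invariance. What the paper's route buys is precisely the removal of your obstacle~(i): no exponential in the skew field is ever needed, because the cancellation happens at the level of the derivative rather than the flow. Your route, on the other hand, makes the geometric content transparent (the $t_2$-flow acts on the Wronski matrix by right column scalings, which are in the gauge group of the quasi-determinant), and it immediately generalises to show that $Q_n^{0,0}$ is constant along the entire $t_2$-orbit, not merely that its derivative vanishes at one point.
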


\begin{proof}
  According to the derivative formula for quasi-determinants, we have
    \begin{equation} \label{y-derivative}
        \begin{split}
       \partial_{t_2} Q_n^{0,0} = - \Theta^{(n+2)} \widehat{\Theta}^{-1} e_{n} + \Theta^{(n)} \widehat{\Theta}^{-1} \widehat{\Theta}_{t_2} \widehat{\Theta}^{-1} e_{n},
        \end{split}
    \end{equation}
    Since $\partial_{t_1}^2 \theta_i = \theta_i k_i^2$, we get
    \[
        \Theta^{(n+2)} = \Theta^{(n)} \operatorname{diag}(k_1^2, k_2^2, \dots, k_n^2),\quad
    \widehat{\Theta}_{t_2} = \widehat{\Theta} \operatorname{diag}(k_1^2, k_2^2, \dots, k_n^2).    
    \]
   By substituting this into (\ref{y-derivative}), we obtain $\partial_{t_2} Q_n^{0,0} = 0$.
\end{proof}
It should be remarked that the constraint \eqref{dis_kdv} is not unique. We can also assume that seed functions $\theta_i$ satisfy a more general condition
\begin{align*}
\partial_{t_1}^2\theta_i=\sum_{j=1}^n \theta_j\lambda_{ji},\quad i=1,\cdots,n
\end{align*}
with $\lambda_{ji}$ being elements in a skew field. With this constraint, we conclude that $\partial_{t_2} Q_n^{0,0}=0$ by replacing the diagonal matrix by the matrix $\Lambda=(\lambda_{ij})_{i,j=1}^n$. This kind of constraint on seed functions has been considered in \cite{ma05} for finding different solutions in the commutative KdV equation.
Besides, we have the following statement for the solution of ncKdV hierarchy.
\begin{coro}
The ncKdV hierarchy (the ncKP 2-reduction hierarchy) admits a quasi-Wronskian solution $Q_n^{0,0}=\left|
\begin{array}{cc}
    \widehat{\Theta} & e_{n} \\
    \Theta^{(n)} & \boxed{0}
\end{array}
\right|$ with seed functions satisfying 
\begin{align*}
\partial_{t_{2n+1}}\theta_i=\partial_{t_1}^{2n+1}\theta_i,\quad \partial_{t_1}^2\theta_i=\sum_{j=1}^n \theta_i\lambda_{ji}.
\end{align*}
\end{coro}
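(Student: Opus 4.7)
The plan is to reduce this to two facts and then combine them. Since the ncKdV hierarchy is by definition the $2$-reduction of the ncKP hierarchy (every $t_{2m}$-derivative of the dependent variable is identically zero), the claim amounts to showing (i) $Q_n^{0,0}$ satisfies the ncKP hierarchy, and (ii) $\partial_{t_{2m}}Q_n^{0,0}=0$ for every $m\ge 1$. For (i), I would invoke Theorem \ref{thm_ncKP} directly: the odd dispersion relations are given in the hypothesis, and the even ones can be assigned by the convention $\partial_{t_{2m}}\theta_i:=\partial_{t_1}^{2m}\theta_i$, which is consistent with the odd flows since all of them commute with multiplication by the constant matrix $\Lambda=(\lambda_{ij})$.

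For (ii), I would mimic the proof of Theorem \ref{thm1}, with the scalar $k_i^2$ replaced by the matrix $\Lambda$. Recasting the constraint in matrix form as $\partial_{t_1}^2\Theta=\Theta\Lambda$, a short induction on $m$ gives
\begin{equation*}
\Theta^{(k+2m)}=\Theta^{(k)}\Lambda^m\qquad(k\ge 0,\ m\ge 0),
\end{equation*}
because $\Lambda$ has constant entries in the skew field and therefore commutes with $\partial_{t_1}$. Applying this row-by-row to the Wronskian matrix $\widehat{\Theta}$, whose rows are $\Theta^{(0)},\ldots,\Theta^{(n-1)}$, yields $\partial_{t_{2m}}\widehat{\Theta}=\widehat{\Theta}\,\Lambda^m$.

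Then I would plug these identities into the quasi-determinant derivative formula used in Theorem \ref{thm1}, which for general $m$ reads
\begin{align*}
\partial_{t_{2m}}Q_n^{0,0}
&=-\Theta^{(n+2m)}\widehat{\Theta}^{-1}e_n+\Theta^{(n)}\widehat{\Theta}^{-1}\bigl(\partial_{t_{2m}}\widehat{\Theta}\bigr)\widehat{\Theta}^{-1}e_n \\
&=-\Theta^{(n)}\Lambda^m\widehat{\Theta}^{-1}e_n+\Theta^{(n)}\widehat{\Theta}^{-1}\widehat{\Theta}\,\Lambda^m\widehat{\Theta}^{-1}e_n=0.
\end{align*}
Combined with Theorem \ref{thm_ncKP}, this establishes that $Q_n^{0,0}$ is a $t_{2m}$-independent solution of the ncKP hierarchy, hence a solution of the ncKdV hierarchy.

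The one subtle point—where I expect the main obstacle to lie—is the compatibility of the extra algebraic constraint with the odd dispersion relations across all times; since $\Lambda$ acts by right-multiplication on the row vector $\Theta$ while the derivatives act on the left, noncommutativity is a real concern. The crux is the identity $\Theta^{(n+2m)}=\Theta^{(n)}\Lambda^m$, which produces exactly the cancellation above precisely because the factor $\Lambda^m$ that emerges on the left term is the same factor that survives after $\widehat{\Theta}^{-1}\widehat{\Theta}=I$ collapses on the right. Once this algebraic compatibility is clearly recorded, the rest of the argument is a direct computation.
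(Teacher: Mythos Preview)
Your proposal is correct and matches the paper's intended argument: the corollary is stated without a separate proof, as an immediate consequence of Theorem~\ref{thm_ncKP} together with Theorem~\ref{thm1} and the subsequent remark extending the diagonal constraint to a general matrix $\Lambda$. Your explicit extension from $\partial_{t_2}$ to all $\partial_{t_{2m}}$ via the induction $\Theta^{(k+2m)}=\Theta^{(k)}\Lambda^m$ is exactly the step the paper leaves implicit, and your handling of the even-flow convention is the natural way to invoke Theorem~\ref{thm_ncKP}.
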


\subsubsection{One-Soliton Solution of ncKdV equation} 
The one-soliton solution of ncKdV equation could be found in many references, such as \cite{paniak01,etingof97} and many others. We show that the solution given in earlier references could be embedded into the Theorem \ref{thm1}. For example, in the work \cite[Appendix A]{etingof97}, the authors showed that a 1-soliton solution of ncKdV could be generated by the seed functions
\begin{align*}
\theta_i=e^{k_i t_1+k_i^3t_3}+a_se^{-k_it_1-k_i^3t_3},
\end{align*}
where $a_i,\,k_i$ are some elements in a skew field. 
Obviously, this seed function satisfies the dispersion relation \eqref{dis_kdv}, then we have the quasi-determinant
\begin{align*}
Q_1^{0,0}=-\theta_1'(\theta_1)^{-1}=-(e^{k_1t_1+k_1^3t_3}-a_1e^{-k_1t_1-k_1^3t_3})k_1(e^{k_1t_1+k_1^3t_3}+a_1e^{-k_1t_1-k_1^3t_3})^{-1}.
\end{align*}
The solution of ncKdV equation is then given by $u=-2\partial_{t_1} Q_1^{0,0}$, which, in commutative case, reduces to the well-known soliton solution given by exponential function 
\begin{align*}
u=\frac{2k_1^2}{\cosh^2(k_1t_1+k_1^3t_3-\frac{1}{2}\ln k_1)}.
\end{align*}

\subsubsection{A matrix-valued 2-soliton solution for ncKdV equation}
Now we turn to a specific choice of seed function as a matrix-valued function to compute a 2-soliton solution for the ncKdV equation. 
Following the matrix-valued seed function chosen for ncKP equation, let's consider
\[
    \theta_i = I_d \mathrm{e}^{\zeta_i} + r_i P_i \mathrm{e}^{-\zeta_i},
\]
where $\zeta_i = k_i t_1 + k_i^3 t_3 + \zeta_i^0$, $k_i$ is a scalar, $r_i$ and $P_i$ are the same as those in (\ref{seed-kp}). Expanding the quasi-determinant expression, we have
\begin{equation}
    \begin{split}
    v &= -\begin{pmatrix} \theta_1k_1^2& \theta_2k_2^2 \end{pmatrix}  \begin{pmatrix} \theta_1&\theta_2 \\\theta_1'&\theta_2'\end{pmatrix}^{-1} \begin{pmatrix} 0_d \\ I_d \end{pmatrix} = - \begin{pmatrix} S_1 & S_2 \end{pmatrix} \begin{pmatrix} 0_d \\ I_d \end{pmatrix} = - S_2,
    \end{split}
\end{equation}
where $S_1$ and $S_2$ satisfy
\begin{equation}
    \begin{split}
        &S_1\theta_1+S_2\theta_1'=\theta_1k_1^2, \\
        &S_1\theta_2+S_2\theta_2'=\theta_2k_2^2.
    \end{split}
\end{equation}
Solving for $S_2$ gives
\begin{equation}
    S_2 =(k_1^2-k_2^2)(\theta_1'\theta_1^{-1}-\theta_2'\theta_2^{-1})^{-1}= \frac{(k_2^2-k_1^2)(I_d+2\beta_2P_2)}{4k_1\beta_1\beta_2-2k_1(\beta_2-\beta_1) + k_2-k_1},
\end{equation}
where 
\begin{equation}
    \beta_1 = \frac{r_1 \mathrm{e}^{-2\zeta_1}}{1+r_1 \mathrm{e}^{-2\zeta_1}}, \quad \beta_2 = \frac{r_2 \mathrm{e}^{-2\zeta_2}}{1-r_2 \mathrm{e}^{-2\zeta_2}}. 
\end{equation}
Therefore, the two-soliton matrix solution of equation (\ref{ncKdV}) is given by $u = -2v_{t_1} = 2 \partial_{t_1} S_2$.

\subsection{3-reduction of ncKP}
In the case of the ncBoussinesq equation, following the approach of the previous subsection, we assume that $v = Q_n^{0, 0}$ is independent of the variable $t_3$. Under this assumption, we have $\partial_{t_3} v=0$, and the ncKP equation (\ref{ncKP}) reduces to the ncBoussinesq equation
\begin{equation}
    3v_{t_2t_2} + v_{t_1t_1t_1t_1} - 6\{v_{t_1}, v_{t_1t_1}\} + 6[v_{t_1}, v_{t_2}] = 0,
\end{equation}
which is consistent with equation (\ref{ncBoussinesq}) under the identification $-2v_{t_1} = u$. 
Similar to the 2-reduction case, we have the following theorem. 
\begin{theorem}
    If the seed functions $\theta_1,\cdots,\theta_n$ satisfy dispersion relations \eqref{dispersion} and the constraint 
    \begin{equation} \label{seed-Boussinesq}
    \partial_{t_1}^3 \theta_j=\theta_j\alpha_j,\quad j=1,2,\cdots,n,
    \end{equation}
   then we have $\partial_{t_3} Q_n^{0,0}=0$.
Here, $\alpha_j$ is a right multiplier of a skew field.
\end{theorem}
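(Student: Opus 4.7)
My plan is to repeat the argument of Theorem 4.1 nearly verbatim, replacing $t_2$ by $t_3$ and the second-order constraint by the third-order constraint (4.8). The starting point is the quasi-determinant derivative formula, which together with the dispersion relation $\partial_{t_3}\theta_j = \partial_{t_1}^3\theta_j$ coming from (3.1) yields the analogue of equation (4.5):
\[
\partial_{t_3} Q_n^{0,0} \;=\; -\,\Theta^{(n+3)}\widehat{\Theta}^{-1} e_{n} \;+\; \Theta^{(n)}\widehat{\Theta}^{-1}\widehat{\Theta}_{t_3}\widehat{\Theta}^{-1} e_{n}.
\]

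Next I would insert the constraint (4.8). Because each $\alpha_j$ is a constant right multiplier, it commutes past $\partial_{t_1}$, so induction on $n$ gives $\partial_{t_1}^{n+3}\theta_j = (\partial_{t_1}^{n}\theta_j)\alpha_j = \theta_j^{(n)}\alpha_j$. Setting $D := \operatorname{diag}(\alpha_1,\ldots,\alpha_n)$, this packages row by row as
\[
\Theta^{(n+3)} \;=\; \Theta^{(n)} D, \qquad \widehat{\Theta}_{t_3} \;=\; \widehat{\Theta}\, D.
\]
Substituting these into the derivative identity above cancels the two contributions,
\[
\partial_{t_3} Q_n^{0,0} \;=\; -\,\Theta^{(n)} D\,\widehat{\Theta}^{-1} e_{n} \;+\; \Theta^{(n)}\widehat{\Theta}^{-1}\widehat{\Theta}\, D\,\widehat{\Theta}^{-1} e_{n} \;=\; 0,
\]
which is the desired conclusion.

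The only delicate point, and the place where genuine NC care is required, is making sure the $\alpha_j$ consistently appear as \emph{right} multipliers throughout, so that the two copies of $D$ end up on the same side and can be eliminated via $\widehat{\Theta}^{-1}\widehat{\Theta} = I$. A naive left-multiplier version of (4.8) would instead require $D$ to commute with $\widehat{\Theta}$, which fails generically in the NC setting. As with Theorem 4.1, the diagonal hypothesis can then be relaxed: replacing $D$ by a general constant matrix $\Lambda = (\lambda_{ji})_{j,i=1}^n$ acting on the right (i.e.\ $\partial_{t_1}^3\theta_i = \sum_j \theta_j \lambda_{ji}$) leaves the cancellation argument intact, yielding a broader family of seed functions for the ncBoussinesq reduction.
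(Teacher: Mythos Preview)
Your proposal is correct and follows exactly the approach the paper intends: the paper does not give a separate proof of this theorem but only states that it is ``similar to the 2-reduction case,'' i.e.\ a verbatim repetition of the proof of Theorem~\ref{thm1} with $t_2$ replaced by $t_3$ and the constraint $\partial_{t_1}^2\theta_j=\theta_jk_j^2$ replaced by $\partial_{t_1}^3\theta_j=\theta_j\alpha_j$. Your remark on relaxing the diagonal hypothesis to a general right-multiplier matrix $\Lambda$ also matches the paper's subsequent discussion around equation~\eqref{cons_bsq}.
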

As mentioned in \cite[Section 4]{kodama24}, the above ncBoussinesq equation doesn't admits any soliton solution, and thus we need to consider a modified constraint to the seed function. 
A suitable modification could be given by 
\begin{align}\label{cons_bsq}
\partial_{t_1}^3\theta_j=c_0\partial_{t_1} \theta_j+\sum_{i=1}^n \theta_j\lambda_{ji},\quad j=1,\cdots,n,
\end{align}
where $\lambda_{ji}$ are elements in a skew field and $c_0$ is a centre of the skew field.
In this case, we can get $\partial_{t_3}Q_n^{0,0}=c_0\partial_{t_1} Q_n^{0,0}$, and then $v=Q_n^{0,0}$ satisfies the equation
\begin{align*}
3v_{t_2t_2}+v_{t_1t_1t_1t_1}-c_0v_{t_1t_1}-6\{v_{t_1},v_{t_1t_1}\}+6[v_{t_1},v_{t_2}]=0,
\end{align*}
which is the standard form of ncBoussinesq equation.

In the commutative case, the constraint \eqref{cons_bsq} with $c_0=-3/4$ has been considered in \cite{li07}. Different kinds of solutions, such as solitons, negatons, positons and complexitons has been discussed. Those results could be easily generalized into the matrix-valued, and we omit the details here.

\section{Conclusion and Discussion}

In this work, we have developed a systematic method for constructing NC integrable equations from quasi-determinant solutions, thereby generalizing the classical direct method to the noncommutative realm. Beginning with the quasi-Wronskian form, we demonstrated that the ncKP and ncDJKM equations arise as derivative identities of quasi-determinants, rather than requiring ad hoc verification. The derived equations exhibit complete cancellation of nonlinear terms, reflecting the intrinsic algebraic coherence of the quasi-determinant framework.

We further explored integrable reductions of the ncKP hierarchy by imposing constraints on the flow variables. In particular, we showed that the ncKdV and ncBoussinesq hierarchies emerge naturally when the seed functions satisfy suitable dispersion relations. Beyond the specific equations considered here, our framework offers a pathway for deriving broader classes of NC integrable systems and their solutions. It would be interesting to extend this method to noncommutative analogues of other hierarchies (such as non-commutative CKP hierarchy) and investigate connections with NC gauge theories. The quasi-determinant, in this context, stands not only as a computational device but as a unifying algebraic principle within the theory of NC integrability.

\section*{Acknowledgement}
This work is partially funded by grants (Grant No. NSFC 12371251, 12175155). G.F. Yu was also partially supported by the National Key Research and Development Program of China (Grant No. 2024YFA1014101), Shanghai Frontier Research Institute for Modern Analysis and the Fundamental Research Funds for the Central Universities.

\section*{Data Availability Statement}
There is no data needed in this research.

\section*{Interest Statement}
There is no conflict of interest.

\appendix
\section{Introduction to Quasi-determinants}
\noindent In the appendix, we briefly recall some of the properties of quasi-determinants. The detailed discussion is seen in \cite{gelfand91,gelfand05}.

Let $A = (a_{ij})$ be an $n \times n$ matrix with entries over a usually noncommutative unitary ring $\mathcal{R}$. We denote the $(i, j)^{\text{th}}$ quasi-determinant by $\left| A \right|_{ij}$, where
\begin{equation} \nonumber
    \left| A \right|_{ij} = \begin{vmatrix}
        A^{ij} & c_j^i \\
        r_i^j & \boxed{a_{ij}}
        \end{vmatrix} = a_{ij} - r_i^j (A^{ij})^{-1} c_j^i.
\end{equation}
Here, $A^{ij}$ is the $(n-1) \times (n-1)$ minor matrix obtained from $A$ by deleting the $i^{\text{th}}$ row and $j^{\text{th}}$ column (note that this matrix must be invertible), $r_i^j$ is the row vector obtained from the $i^{\text{th}}$ row of $A$ by deleting the $j^{\text{th}}$ entry, and $c_j^i$ is the column vector obtained from the $j^{\text{th}}$ column of $A$ by deleting the $i^{\text{th}}$ entry. If the entries in $A$ commute, then
\begin{equation} \nonumber
    \left| A \right|_{ij} = (-1)^{i+j} \frac{\det(A)}{\det(A^{i,j})}.
\end{equation}
The derivative of a quasi-determinant follows two distinct formulas, depending on the structure of the underlying matrix. For a general quasi-determinant, the derivative takes the form
\begin{equation} \nonumber
    \begin{vmatrix}
        A & B \\
        C & \boxed{d}
    \end{vmatrix}^{\prime} = d' - C'A^{-1}B + CA^{-1}A'A^{-1}B - CA^{-1}B',
\end{equation}
where $A$ is an $n \times n$ matrix, $C$ is a row vector and $B$ a column vector. If $A$ is a Grammian-like matrix with derivative
\begin{equation} \nonumber
    A^{\prime} = \sum_{i=1}^k E_i F_i
\end{equation}
for some integer $k$, where $E_i$ ($F_i$) are column (row) vectors of appropriate length, then the derivative becomes
\begin{equation} \label{dquasigrammian}
    \begin{vmatrix}
        A & B \\
        C & \boxed{d}
    \end{vmatrix}^{\prime} = 
    \begin{vmatrix} 
        A & B \\ 
        C^{\prime} & \boxed{d^{\prime}} 
    \end{vmatrix} + 
    \begin{vmatrix} 
        A & B^{\prime} \\ 
        C & \boxed{0} 
    \end{vmatrix} + \sum_{i=1}^{k} 
    \begin{vmatrix} 
        A & E_i \\ 
        C & \boxed{0} 
    \end{vmatrix} 
    \begin{vmatrix} 
        A & B \\ 
        F_i & \boxed{0} 
    \end{vmatrix}.
\end{equation}
If the matrix $A$ does not have Grammian-like structure, then the derivative becomes
\begin{equation} \label{dquasiwronskian}
    \begin{split}
    \begin{vmatrix}
        A & B \\
        C & \boxed{d}
    \end{vmatrix}^{\prime} 
    & = 
    \begin{vmatrix}
        A & B \\
        C^{\prime} & \boxed{d^{\prime}}
    \end{vmatrix} 
        + \sum_{k=1}^{n} \begin{vmatrix}
        A & e_k \\
        C & \boxed{0}
    \end{vmatrix} 
    \begin{vmatrix}
        A & B \\
        (A^k)^{\prime} & \boxed{(B^k)^{\prime}}
    \end{vmatrix} \\
    & =
    \begin{vmatrix}
        A & B^{\prime} \\
        C & \boxed{d^{\prime}}
    \end{vmatrix} 
        + \sum_{k=1}^{n} 
    \begin{vmatrix}
        A & (A_k)^{\prime} \\
        C & \boxed{(C_k)^{\prime}}
    \end{vmatrix} 
    \begin{vmatrix}
        A & B \\
        e_k^\top & \boxed{0}
    \end{vmatrix},
\end{split}
\end{equation}
where $A^k$ ($A_k$) denotes the $k^\text{th}$ row (column) of the matrix $A$.

\section{Derivative Formulas}
\noindent In the appendix, we present explicit derivatives of $v = Q_n^{0, 0}$ with respect to $t_1, t_2, t_3, t_4$. The derivatives are arranged in ascending order according to their total differential order, where one differentiation with respect to 
$t_1$ is counted as order $1$, with respect to $t_2$ as order $2$, and so on.  The derivative formula could be obtained directly from \eqref{deri_quasi}.\\
$1^{st}$ order
\[
    v_{t_1} = Q_n^{1, 0} - Q_n^{0, 1} + Q_n^{0, 0}Q_n^{0, 0}
\] 
$2^{nd}$ order
\begin{align*}
   & v_{t_2} = Q_n^{2, 0} - Q_n^{0, 2} + Q_n^{0, 0}Q_n^{1, 0} + Q_n^{0, 1}Q_n^{0, 0}\\
    & v_{t_1t_1} = Q_n^{2, 0} - 2Q^{1, 1} + Q_n^{0, 2} + Q_n^{0, 0}[Q_n^{1, 0} - 2Q^{0, 1}] + [2Q^{1, 0} - Q_n^{0, 1}]Q_n^{0, 0} + 2(Q_n^{0, 0})^3
\end{align*}
$3^{rd}$ order
\begin{align*}
   & v_{t_3} = Q_n^{3, 0} - Q_n^{0, 3} + Q_n^{0, 0}Q_n^{2, 0} + Q_n^{0, 1}Q_n^{1, 0} + Q_n^{0, 2}Q_n^{0, 0}\\
    &v_{t_1t_2} = Q_n^{3, 0} - Q_n^{2, 1} - Q_n^{1, 2} + Q_n^{0, 3} + Q_n^{0, 0}[Q_n^{2, 0} - Q_n^{1, 1} - Q_n^{0, 2}] + [Q_n^{2, 0} + Q_n^{1, 1} - Q_n^{0, 2}]Q_n^{0, 0} \\ 
    & \qquad + Q_n^{1, 0}Q_n^{1, 0} - Q_n^{0, 1}Q_n^{0, 1} + Q_n^{0, 0}[Q_n^{1, 0} + Q_n^{0, 1}]Q_n^{0, 0} + (Q_n^{0, 0})^2Q^{1, 0} + Q_n^{0, 1}(Q_n^{0, 0})^2\\
    &v_{t_1t_1t_1} = Q_n^{3, 0} - 3Q^{2, 1} + 3Q^{1, 2} - Q_n^{0, 3} + 3Q^{1, 0}[Q_n^{1, 0} - 2Q^{0, 1}] - Q_n^{0, 1}[2Q^{1, 0} - 3Q^{0, 1}] \\ \nonumber
    & \qquad + Q_n^{0, 0}[Q_n^{2, 0} - 3Q^{1, 1} + 3Q^{0, 2}] + [3Q^{2, 0} - 3Q^{1, 1} + Q_n^{0, 2}]Q_n^{0, 0} + 3Q^{0, 0}[Q_n^{1, 0} - Q_n^{0, 1}]Q_n^{0, 0} \\ \nonumber
    & \qquad + 3(Q_n^{0, 0})^2[Q_n^{1, 0} - 2Q^{0, 1}] + 3[2Q^{1, 0} - Q_n^{0, 1}](Q_n^{0, 0})^2 + 6(Q_n^{0, 0})^4
\end{align*}
$4^{th}$ order
\begin{align*}
    &v_{t_4} = Q_n^{4, 0} - Q_n^{0, 4} + Q_n^{0, 0}Q_n^{3, 0} + Q_n^{0, 3}Q_n^{0, 0} + Q_n^{0, 1}Q_n^{2, 0} + Q_n^{0, 2}Q_n^{1, 0}\\
    &v_{t_2t_2} = Q_n^{4, 0} - 2Q^{2, 2} + Q_n^{0, 4} + Q_n^{0, 0}[Q_n^{3, 0} - 2Q^{1, 2}] + [2Q^{2, 1} - Q_n^{0, 3}]Q_n^{0, 0} + Q_n^{0, 1}[Q_n^{2, 0} - 2Q^{0, 2}] \\
    & \qquad + [2Q^{2, 0} - Q_n^{0, 2}]Q_n^{1, 0} + 2[Q_n^{0, 0}Q_n^{1, 0}Q_n^{1, 0} + Q_n^{0, 1}Q_n^{0, 1}Q_n^{0, 0} + Q_n^{0, 0}Q_n^{1, 1}Q_n^{0, 0} + Q_n^{0, 1}Q_n^{0, 0}Q_n^{1, 0}] \\
    &v_{t_1t_3} = Q_n^{4, 0} - Q_n^{3, 1} - Q_n^{1, 3} + Q_n^{0, 4} + Q_n^{0, 0}[Q_n^{3, 0} - Q_n^{2, 1} - Q_n^{0, 3}] + [Q_n^{3, 0} + Q_n^{1, 2} - Q_n^{0, 3}]Q_n^{0, 0} \\
    & \qquad - Q_n^{0, 2}Q_n^{0, 1} + Q_n^{1, 0}Q_n^{2, 0} + Q_n^{1, 1}Q_n^{1, 0} - Q_n^{0, 1}Q_n^{1, 1} + Q_n^{0, 0}[Q_n^{0, 2} + Q_n^{2, 0}]Q_n^{0, 0} + Q_n^{0, 0}Q_n^{0, 0}Q_n^{2, 0} \\ 
    & \qquad + Q_n^{0, 2}Q_n^{0, 0}Q_n^{0, 0} + Q_n^{0, 0}Q_n^{0, 1}Q_n^{1, 0} + Q_n^{0, 1}Q_n^{1, 0}Q_n^{0, 0} \\
\end{align*}
\begin{align*}
    &v_{t_1t_1t_1t_1} = Q_n^{4, 0} - 4Q^{3, 1} + 6Q^{2, 2} -4Q^{1, 3} + Q_n^{0, 4} + Q_n^{0, 0}[Q_n^{3, 0} - 4Q^{2, 1} + 6Q^{1, 2} - 4Q^{0, 3}] \\ 
    & \qquad + [4Q^{3, 0} - 6Q^{2, 1} + 4Q^{1, 2} - Q_n^{0, 3}]Q_n^{0, 0} - Q_n^{0, 1}[3Q^{2, 0} - 8Q^{1, 1} + 6Q^{0, 2}] + [6Q^{2, 0} \\ 
    & \qquad - 8Q^{1, 1} + 3Q^{0, 2}]Q_n^{1, 0} + 4Q^{1, 0}[Q_n^{2, 0} - 3Q^{1, 1} + 3Q^{0, 2}] - 4[3Q^{2, 0} - 3Q^{1, 1} + Q_n^{0, 2}]Q_n^{0, 1} \\ 
    & \qquad + 4 Q_n^{0, 0}Q_n^{0, 0}[Q_n^{2, 0} - 3Q^{1, 1} + 3Q^{0, 2}] + 2Q^{0, 0}[2Q^{2, 0} - 3Q^{1, 1} + 2Q^{0, 2}]Q_n^{0, 0} + 4[3Q^{2, 0} \\
    & \qquad - 3Q^{1, 1} + Q_n^{0, 2}]Q_n^{0, 0}Q_n^{0, 0} -4Q^{0, 0}Q_n^{0, 1}[2Q^{1, 0} - 3Q^{0, 1}] + 4[3Q^{1, 0} - 2Q^{0, 1}]Q_n^{1, 0}Q_n^{0, 0} \\
    & \qquad -6[2Q^{1, 0} - Q_n^{0, 1}]Q_n^{0, 1}Q_n^{0, 0} + 6[2Q^{1, 0}Q_n^{0, 0} - Q_n^{0, 1}Q_n^{0, 0} + Q_n^{0, 0}Q_n^{1, 0}][Q_n^{1, 0} - 2Q^{0, 1}] \\
    & \qquad + 12Q^{0, 0}Q_n^{0, 0}Q_n^{0, 0}[Q_n^{1, 0} - 2Q^{0, 1}] + 12[2Q^{1, 0} - Q_n^{0, 1}]Q_n^{0, 0}Q_n^{0, 0}Q_n^{0, 0} + 12Q^{0, 0}Q_n^{0, 0}[Q_n^{1, 0} \\ 
    & \qquad - Q_n^{0, 1}]Q_n^{0, 0} + 12Q^{0, 0}[Q_n^{1, 0} - Q_n^{0, 1}]Q_n^{0, 0}Q_n^{0, 0} + 24Q^{0, 0}Q_n^{0, 0}Q_n^{0, 0}Q_n^{0, 0}Q_n^{0, 0} 
\end{align*}
$5^{th}$ order
\begin{align*}
    &v_{t_1t_4} = Q_n^{5, 0} - Q_n^{4, 1} - Q_n^{1, 4} + Q_n^{0, 5} + Q_n^{0, 0}[Q_n^{4, 0} - Q_n^{3, 1} - Q_n^{0, 4}] + [Q_n^{4, 0} + Q_n^{1, 3} - Q_n^{0, 4}]Q_n^{0, 0} \\
    & \qquad + Q_n^{1, 2}Q_n^{1, 0} - Q_n^{0, 1}Q_n^{2, 1} + Q_n^{1, 1}Q_n^{2, 0} - Q_n^{0, 2}Q_n^{1, 1} + Q_n^{1, 0}Q_n^{3, 0} - Q_n^{0, 3}Q_n^{0, 1} + Q_n^{0, 0}[Q_n^{3, 0} \\ 
    & \qquad + Q_n^{0, 3}]Q_n^{0, 0} + \{Q_n^{0, 0}, Q_n^{0, 1}Q_n^{2, 0} + Q_n^{0, 2}Q_n^{1, 0}\} + Q_n^{0, 0}Q_n^{0, 0}Q_n^{3, 0} + Q_n^{0, 3}Q_n^{0, 0}Q_n^{0, 0}\\
    &v_{t_2t_3} = Q_n^{5, 0} - Q_n^{3, 2} - Q_n^{2, 3} + Q_n^{0, 5} + Q_n^{0, 0}[Q_n^{4, 0} - Q_n^{2, 2} - Q_n^{1, 3}] + [Q_n^{3, 1} + Q_n^{2, 2} - Q_n^{0, 4}]Q_n^{0, 0} \\
    & \qquad + Q_n^{0, 1}[Q_n^{3, 0} - Q_n^{1, 2} - Q_n^{0, 3}] + [Q_n^{3, 0} + Q_n^{2, 1} - Q_n^{0, 3}]Q_n^{1, 0} + Q_n^{2, 0}Q_n^{2, 0} - Q_n^{0, 2}Q_n^{0, 2} \\ 
    & \qquad + Q_n^{0, 0}[Q_n^{2, 1} + Q_n^{1, 2}]Q_n^{0, 0} + Q_n^{0, 0}[Q_n^{1, 1}Q_n^{1, 0} + \{Q_n^{1, 0}, Q_n^{2, 0}\}] + [Q_n^{0, 1}Q_n^{1, 1} + \{Q_n^{0, 1}, Q_n^{0, 2}\}]Q_n^{0, 0} \\ 
    & \qquad + Q_n^{0, 1}[Q_n^{1, 0}Q_n^{1, 0} + Q_n^{0, 0}Q_n^{2, 0}] + [Q_n^{0, 1}Q_n^{0, 1} + Q_n^{0, 2}Q_n^{0, 0}]Q_n^{1, 0}
\end{align*}
\begin{align*}
    &v_{t_1t_1t_1t_2} = Q_n^{5, 0} - 3Q^{4, 1} + 2Q^{3, 2} + 2Q^{2, 3} - 3Q^{1, 4} + Q_n^{0, 5} + Q_n^{0, 0}[Q_n^{4, 0} - 3Q^{3, 1} + 2Q^{2, 2} + 2Q^{1, 3} \\
    & \qquad - 3Q^{0, 4}] + [3Q^{4, 0} - 2Q^{3, 1} - 2Q^{2, 2} + 3Q^{1, 3} - Q_n^{0, 4}]Q_n^{0, 0} - Q_n^{0, 1}[2Q^{3, 0} - 3Q^{2, 1} - 2Q^{1, 2} \\
    & \qquad + 4Q^{0, 3}] + [4Q^{3, 0} - 2Q^{2, 1} - 3Q^{1, 2} + 2Q^{0, 3}]Q_n^{1, 0} + 3Q^{1, 0}[Q_n^{3, 0} - 2Q^{2, 1} + 2Q^{0, 3}] \\ 
    & \qquad - 3[2Q^{3, 0} - 2Q^{1, 2} + Q_n^{0, 3}]Q_n^{0, 1} - 3Q^{1, 1}[Q_n^{2, 0} - 2Q^{0, 2}] - 3[2Q^{2, 0} - Q_n^{0, 2}]Q_n^{1, 1} \\
    & \qquad + 4[Q_n^{2, 0}Q_n^{2, 0} - Q_n^{0, 2}Q_n^{0, 2}] + 3Q^{0, 0}Q_n^{0, 0}[Q_n^{3, 0} - 2Q^{2, 1} + 2Q^{0, 3}] + 3[2Q^{3, 0} - 2Q^{1, 2} \\ 
    & \qquad + Q_n^{0, 3}]Q_n^{0, 0}Q_n^{0, 0} + Q_n^{0, 0}[3Q^{3, 0} - 2Q^{2, 1} - 2Q^{1, 2} + 3Q^{0, 3}]Q_n^{0, 0} + Q_n^{0, 0}[6\{Q_n^{0, 1}, Q_n^{0, 2}\} \\ 
    & \qquad + 4\{Q_n^{1, 0}, Q_n^{2, 0}\} - 6Q^{2, 0}Q_n^{0, 1} - 3Q^{0, 1}Q_n^{2, 0} - 3Q^{0, 2}Q_n^{1, 0} - 6Q^{1, 0}Q_n^{1, 1} - 2Q^{1, 1}Q_n^{1, 0}] \\
    & \qquad + [6\{Q_n^{1, 0}, Q_n^{2, 0}\} + 4\{Q_n^{0, 1}, Q_n^{0, 2}\} - 6Q^{1, 0}Q_n^{0, 2} - 3Q^{0, 2}Q_n^{1, 0} - 3Q^{0, 1}Q_n^{2, 0} - 6Q^{1, 1}Q_n^{0, 1} \\
    & \qquad - 2Q^{0, 1}Q_n^{1, 1}]Q_n^{0, 0} + 6Q^{1, 0}Q_n^{0, 0}[Q_n^{2, 0} - Q_n^{1, 1} - Q_n^{0, 2}] - 6[Q_n^{2, 0} + Q_n^{1, 1} - Q_n^{0, 2}]Q_n^{0, 0}Q_n^{0, 1} \\ 
    & \qquad + 2[3Q^{2, 0} - Q_n^{0, 2}]Q_n^{0, 0}Q_n^{1, 0} - 2Q^{0, 1}Q_n^{0, 0}[Q_n^{2, 0} - 3Q^{0, 2}] + 6Q^{1, 0}Q_n^{1, 0}Q_n^{1, 0} + 6Q^{0, 1}Q_n^{0, 1}Q_n^{0, 1} \\ 
    & \qquad - 6Q^{1, 0}Q_n^{1, 0}Q_n^{0, 1} - 2Q^{0, 1}Q_n^{1, 0}Q_n^{1, 0} - 6Q^{1, 0}Q_n^{0, 1}Q_n^{0, 1} - 2Q^{0, 1}Q_n^{0, 1}Q_n^{1, 0} + 6[Q_n^{2, 0} + Q_n^{1, 1} \\ 
    & \qquad - Q_n^{0, 2}]Q_n^{0, 0}Q_n^{0, 0}Q_n^{0, 0} + 6Q^{0, 0}Q_n^{0, 0}Q_n^{0, 0}[Q_n^{2, 0} - Q_n^{1, 1} - Q_n^{0, 2}] + 6Q^{0, 0}Q_n^{0, 0}[Q_n^{2, 0} - Q_n^{0, 2}]Q_n^{0, 0} \\ 
    & \qquad + 6Q^{0, 0}[Q_n^{2, 0} - Q_n^{0, 2}]Q_n^{0, 0}Q_n^{0, 0} + 6\{Q_n^{0, 0}Q_n^{0, 0}, Q_n^{1, 0}Q_n^{1, 0} - Q_n^{0, 1}Q_n^{0, 1}\} + 6Q^{0, 0}[Q_n^{1, 0}Q_n^{1, 0} \\ 
    & \qquad - Q_n^{0, 1}Q_n^{0, 1}]Q_n^{0, 0} + 6Q^{0, 0}[Q_n^{1, 0}Q_n^{0, 0}Q_n^{1, 0} - Q_n^{1, 0}Q_n^{0, 0}Q_n^{0, 1} - Q_n^{0, 1}Q_n^{0, 0}Q_n^{0, 1}] \\ 
    & \qquad + 6[Q_n^{1, 0}Q_n^{0, 0}Q_n^{1, 0} + Q_n^{1, 0}Q_n^{0, 0}Q_n^{0, 1} - Q_n^{0, 1}Q_n^{0, 0}Q_n^{0, 1}]Q_n^{0, 0} + 6[Q_n^{1, 0}Q_n^{0, 1}, Q_n^{0, 0}Q_n^{0, 0}] \\ 
    & \qquad + 6Q^{1, 0}Q_n^{0, 0}Q_n^{0, 0}Q_n^{1, 0} - 6Q^{0, 1}Q_n^{0, 0}Q_n^{0, 0}Q_n^{0, 1} + 6Q^{0, 0}Q_n^{0, 0}Q_n^{0, 0}[Q_n^{1, 0} + Q_n^{0, 1}]Q_n^{0, 0} \\
    & \qquad + 6Q^{0, 0}[Q_n^{1, 0} + Q_n^{0, 1}]Q_n^{0, 0}Q_n^{0, 0}Q_n^{0, 0} + 6Q^{0, 0}Q_n^{0, 0}[Q_n^{1, 0} + Q_n^{0, 1}]Q_n^{0, 0}Q_n^{0, 0} \\ 
    & \qquad + 6Q^{0, 0}Q_n^{0, 0}Q_n^{0, 0}Q_n^{0, 0}Q_n^{1, 0} + 6Q^{0, 1}Q_n^{0, 0}Q_n^{0, 0}Q_n^{0, 0}Q_n^{0, 0}    
\end{align*}

It is worth noting that we have not presented the full set for 4- and 5-order derivatives of $v$, for example, we missed the expressions for $v_{t_1t_1t_2}$, $v_{t_1t_2t_2}$, $v_{t_1t_1t_3}$ or $v_{t_1t_1t_1t_1t_1}$, since these derivatives are not involved in the ncKP or ncDJKM equations presented in this work.

\end{document}